\def\rank{{\rm rank}\mathop}

\documentclass[10pt,conference]{IEEEtran}
\usepackage{cite}
\usepackage{mathrsfs}
\usepackage{hyperref}
\usepackage{amsthm}
\usepackage{amsmath}
\usepackage{amssymb}
\usepackage{revsymb}
\usepackage{graphicx}
\def\mat#1{\mathcal{#1}}

\newtheorem{theorem}{Theorem}


\begin{document}

\title{Linked-Cluster Technique for Finding the Distance of a Quantum
  LDPC Code}

\author{\authorblockN{Alexey A. Kovalev}
  \authorblockA{Department of Physics \& Astronomy\\
    University of California\\
    Riverside, CA 92521, USA \\
    Email: alexey.kovalev@ucr.edu}
\and 
\authorblockN{Ilya Dumer}\IEEEmembership{Fellow,˜IEEE}
  \authorblockA{Department of Electrical Engineering\\
    University of California\\
    Riverside, CA 92521, USA \\
    Email: dumer@ee.ucr.edu}
  \and
\authorblockN{Leonid P. Pryadko}
\authorblockA{Department of Physics \& Astronomy\\
University of California\\
Riverside, CA 92521, USA \\
Email: leonid@ucr.edu}
}

\maketitle

\begin{abstract}
  We present a linked-cluster technique for calculating the distance
  of a quantum LDPC code.  It offers an advantage over existing
  deterministic techniques for codes with small relative distances
  (which includes all known families of quantum LDPC codes), and over
  the probabilistic technique for codes with sufficiently high rates.
\end{abstract}

\section{Introduction}
A practical implementation of a quantum computer will rely on quantum error
correction (QEC) \cite{shor-error-correct,Knill-Laflamme-1997,Bennett-1996}
due to the fragility of quantum states.  There is a strong belief that surface
(toric) codes \cite{kitaev-anyons,Dennis-Kitaev-Landahl-Preskill-2002} can
offer the fastest route to scalable quantum computation due to the error
threshold around 1\% and the locality of required gates
\cite{Raussendorf-Harrington-2007,Wang-Fowler-Austin-Hollenberg-Lloyd-2011,Fowler:PRA2012,Bombin-PRX-2012}.
Unfortunately, in the nearest future, the surface codes (in fact, any
two-dimensional codes with local stabilizer
generators\cite{Bravyi-Poulin-Terhal-2010}) can only lead to proof of the
principle realizations as they encode a limited number of qubits ($k$), making
any implementation of a useable quantum computer large (e.g., $2.2\times10^8$
physical qubits are required for a useful realization of Shor's algorithm
\cite{Fowler:2012arXiv}).

Lifting the restriction of locality but preserving the condition that
the stabilizer generators should only involve a limited number of
qubits, one gets the quantum LDPC codes, or, more precisely, quantum
sparse-graph codes\cite{Postol-2001,MacKay-Mitchison-McFadden-2004}.
Unlike the surface codes, these more general quantum LDPC codes can
have a finite rate.  On the other hand, while there are no known upper
bounds on the parameters of such codes, in practice, all families of
quantum LDPC codes where the upper limit on the distance is known,
have the distance scaling as a square root of the block
length\cite{Tillich2009,%
  Kovalev-Pryadko-2012,Kovalev-Pryadko-Hyperbicycle-2012,
  Andriyanova-Maurice-Tillich-2012}.  Nevertheless, such codes (in
fact, any family of quantum or classical LDPC codes with limited
weights of the columns and rows of the parity check matrix, and
distance scaling as a power or a logarithm of the block length $n$) have a
finite error probability threshold, both in the standard setting where
syndrome is measured exactly, and with the syndrome measurement
errors\cite{Kovalev-Pryadko-FT-2012}.  

Given that non-local two-qubit gates are relatively inexpensive with
floating gates\cite{Loss:PRX2012}, superconducting and trapped-ion
qubits, as well as more exotic schemes with
teleportation\cite{yamamoto-cnot-2003,%
  Martinis-science-2005,Benhelm-2008,Friedenauer-2008,%
  Bennett-teleportation-1993,%
  Gottesman-Chuang-1999}, a quantum computer relying on quantum LDPC
codes is quite feasible.  An example of a universal set of gates based
on dynamical decoupling pulses for an arbitrary number of qubits with
Ising couplings forming a bipartite graph (e.g., the Tanner graph
corresponding to a quantum LDPC code) has been recently suggested by
one of us\cite{De-Pryadko-2012}.  

Compared to general quantum codes, with a quantum LDPC code, each
quantum measurement involves fewer qubits, measurements can be done in
parallel, and also the classical processing could potentially be
enormously simplified (note, however, that belief-propagation and
related decoding algorithms that work so well for classical LDPC codes\cite{Gallager1962,MacKay:2002} may
falter in the quantum case\cite{Poulin-Chung-2008}).  Compared to
surface codes, more general LDPC codes have higher rates, which
translates in a large reduction of the total number of qubits
necessary to build a useful quantum computer.  Note that while our
analytical threshold estimate in Ref.~\cite{Kovalev-Pryadko-FT-2012}
is quite low, there are examples of quantum LDPC codes demonstrated to
beat the bounded distance decoding
limit\cite{Kasai-Hagiwara-Imai-Sakaniwa-2012}.  Overall, it is quite
plausible that the operation of quantum computers of the future will
rely on (non-local) quantum LDPC codes.

The very general proof\cite{Kovalev-Pryadko-FT-2012} of the existence
of a finite error probability threshold for quantum and classical LDPC
codes with asymptotically zero relative distance is based on a simple
observation that errors for such codes are likely to form small
clusters affecting disjoint sets of stabilizer generators (parity
check matrix rows).  While the total weight of an error could be huge,
the error can be surely detected if the size of each cluster is
smaller than the code distance.  Thus, in the case of the error
detection, the threshold problem is related to the cluster size
distribution for site percolation on a graph related to the Tanner
graph of the code.

In this work, we apply the idea of error clustering with LDPC codes to
design a numerical algorithm for finding a distance of such a code.
The basic principle is formulated in Theorem \ref{th:basic}: to find
the distance of a code, one only needs to check error configurations
corresponding to connected error clusters.  For any error weight $w\ll
n$, the number of such clusters is exponentially smaller than that of
generic errors of the same weight.  We consider the complexity of
several well-known classical algorithms for finding code distance in
application to quantum error correcting code.  We conclude that the
clustering algorithm beats deterministic techniques at sufficiently
small relative distances (asymptotically at large $n$, all known
families of quantum LDPC codes have zero relative distance), and the
probabilistic technique for high-rate codes with small relative
distances.

\section{Background.}
\subsection{Error-correcting codes}
A $q$-ary linear code $\mathcal{C}$ with parameters $[n,k,d]_q$ is a
$k$-dimensional subspace of the vector space $\mathbb{F}_q^n$ of all
$q$-ary strings of length $n$.  Code distance $d$ is the minimal
Hamming weight (number of non-zero elements) of a non-zero string in
the code.  A linear code is uniquely specified by the parity check
matrix $H$, namely $\mathcal{C}=\{\mathbf{c}\in \mathbb{F}_q^n| H
\mathbf{c}=0\}$, where operations are done according to the
$\mathbb{F}_q$ algebra.

A quantum $[[n,k,d]]$ (qubit) stabilizer code $\mathcal{Q}$ is a
$2^k$-dimensional subspace of the $n$-qubit Hilbert space
$\mathbb{H}_{2}^{\otimes n}$, a common $+1$ eigenspace of all
operators in an Abelian \emph{stabilizer group}
$\mathscr{S}\subset\mathscr{P}_{n}$, $-\openone\not\in\mathscr{S}$,
where the $n$-qubit Pauli group $\mathscr{P}_{n}$ is generated by
tensor products of the $X$ and $Z$ single-qubit Pauli operators.  The
stabilizer is typically specified in terms of its generators,
$\mathscr{S}=\left\langle S_{1},\ldots,S_{n-k}\right\rangle $;
measuring the generators $S_i$ produces the \emph{syndrome} vector.
The weight of a Pauli operator is the number of qubits it affects.
The distance $d$ of a quantum code is the minimum weight of an
operator $U$ which commutes with all operators from the stabilizer
$\mathscr{S}$, but is not a part of the stabilizer,
$U\not\in\mathscr{S}$. A code of distance $d$ can detect any error of
weight up to $d-1$, and correct up to $\lfloor d/2\rfloor$.

A Pauli operator $U\equiv i^{m}X^{\mathbf{v}}Z^{\mathbf{u}}$, where
$\mathbf{v},\mathbf{u}\in\{0,1\}^{\otimes n}$ and
$X^{\mathbf{v}}=X_{1}^{v_{1}}X_{2}^{v_{2}}\ldots X_{n}^{v_{n}}$,
$Z^{\mathbf{u}}=Z_{1}^{u_{1}}Z_{2}^{u_{2}}\ldots Z_{n}^{u_{n}}$, can
be mapped, up to a phase, to a quaternary vector, $\mathbf{e}\equiv
\mathbf{u}+\omega \mathbf{v}$, where $\omega^2\equiv
\overline{\omega}\equiv\omega+1$.  A product of two quantum operators
corresponds to a sum ($\bmod\, 2$) of the corresponding vectors.  Two
Pauli operators commute if and only if the \emph{trace inner product}
$\mathbf{e}_1 * \mathbf{e}_2 \equiv \mathbf{e}_1 \cdot
\overline{\mathbf{e}}_2 + \overline{\mathbf{e}}_1 \cdot \mathbf{e}_2$
of the corresponding vectors is zero, where $\overline{\mathbf{e}}
\equiv \mathbf{u} + \overline\omega \mathbf{v}$.

With this map, generators of a stabilizer group are mapped to rows of
a parity check matrix $H$ of an \emph{additive} (forming a group with
respect to addition but not necessarily over the full set of
$\mathbb{F}_4$ operations) code over $\mathbb{F}_4$, with the
condition that the trace inner product of any two rows
vanishes\cite{Calderbank-1997}.  The vectors generated by rows of $H$
correspond to stabilizer generators which act trivially on the code;
these vectors form the \emph{degeneracy group} and are omitted from
the distance calculation.  For a more narrow set of CSS codes the
parity check matrix is a direct sum $H=G_x\oplus \omega G_z$, and the
commutativity condition simplifies to $G_{x}G_{z}^{T}=0$.

An LDPC code, quantum or classical, is a code with a sparce parity
check matrix.  For a \emph{regular} $(j,l)$ LDPC code, every column
and every row of $H$ have weights $j$ and $l$ respectively, while for
a $(j,l)$-limited LDPC code these weigths are limited from above by
$j$ and $l$.  

The huge advantage of classical LDPC codes is that they can be decoded
in linear time using belief propagation (BP) and related iterative
methods\cite{Gallager1962,MacKay:2002}.  Unfortunately, this is not
necessarily the case for quantum LDPC codes: Tanner graphs for quantum
codes have many short loops of length $4$, which cause a dramatic
deterioration of the convergence of the BP
algorithm\cite{Poulin-Chung-2008}.  This problem can be circumvented
with specially designed quantum
codes\cite{Kasai-Hagiwara-Imai-Sakaniwa-2012,Andriyanova-Maurice-Tillich-2012},
but a general solution is not known.  One alternative which has
polynomial complexity is $n$, approaching linear for very small error
rates, is the cluster-based decoding suggested in
Ref.~\cite{Kovalev-Pryadko-FT-2012}.

\section{Generic numerical techniques for distance calculation}
\label{sec:generic}
The problem of numerically calculating the distance of a linear code (finding
the minimum-weight codeword in the code) is related to the decoding problem:
find the most likely (minimum-weight in the case of the $q$-ary symmetric
channel) error which gives the same syndrome as the received codeword.  The
number of required steps $N$ usually scales exponentially with the blocklength
$n$, $N\propto q^{F n}$; we characterize the complexity by the exponent $F$.
For example, for a linear $q$-ary code with $k$ information qubits, there are
$q^k$ distinct codewods, going over each of the codewords has the
complexity exponent 
$F=R$, where $R=k/n$ is the code rate.  When used for decoding, one can
instead store the list of all $q^{n-k}$ syndromes and coset leaders, which corresponds to the
complexity $F=1-R$.

\subsection{Sliding window technique}
\label{sec:sliding} 
This decoding technique has been
proposed in Ref.~\cite{Evseev-1983}, and generalized in
Ref.~\cite{Dumer-1996}.
A related technique has also been independently invented in
Refs.~\cite{Zimmermann-1996,Grassl-2006}. 
For a $q$-ary code with relative distance $\delta\equiv d/n$, the
complexity exponent
is $F_A=R H_q(\delta)$, where $H_q(x)=x\log_q(q-1)-x\log_qx-(1-x)\log_q(1-x)$ is
the $q$-ary entropy function; for a code with the rate $R\equiv k/n$ on the
Gilbert-Varshamov bound, $R=1-H_q(\delta)$, this gives the complexity exponent
$F_A^{(GV)}=R(1-R)$, reaching the maximum of $1/4$ at $R=1/2$.

The idea is to use only $k+o(n)$ consecutive positions to recover any
codeword of a $q$-ary linear $[n,k]$ code. For example, any $k$
consecutive positions suffice in a cyclic code. Similarly, it is easy
to verify that in most (random) $k\times n$ generator matrices $G$
any $s=k+2\left\lfloor \log_{q}n\right\rfloor $ consecutive columns
form a submatrix $G_{s}$ of a maximum rank $k$. Thus, $s$ (error free)
consecutive bits suffice to recover a codeword in most random $[n,k]$
codes.

To find a codeword $c$ of a minimum weight $w$, we choose a sliding
window $I(i,s)$ that begins in a position $i=0,\ldots,n-1$ and has
length $s$.  Our goal is to find the window that has the average
Hamming weight, $v\equiv\left\lfloor
  ws/n\right\rfloor $.  (Note that a sliding window can change its
weight only by one when it moves from any position $i$ to $i+1$; thus
at least one of the $n$ windows will have weight $v$.)  For each
$i$ and for each $w=1,2,\ldots$, we encode all possible
\begin{equation}
L=(q-1)^{v}\textstyle {s\choose v}
  \label{eq:sliding-wind}
\end{equation}
vectors of length $s$ and weight $v$.  We stop the procedure
once we find an encoded codeword of weight $w$. The overall procedure
has complexity of the order $Ln^{2}\asymp q^{F_A n}$, where
$F_A=RH_{q}(\delta)$. 

Unfortunately, the performance suffers when the technique is applied
to a quantum code.  Indeed, the additive quaternary code corresponding
to an $[[n,k]]$ stabilizer code operates in a space with $4^n$ symbols
with only $2^{r}=4^{r/2}$ distinct syndromes, where $r\equiv n-k$ is
the redundancy of the quantum code; the effective rate is
thus\footnote{This construction is analogous to pseudogenerators
  introduced in Ref.~\cite{White-Grassl-2006}.}
$R'=(n-r/2)/n=(1+R)/2$.  The same effective rate is obtained if we
take a CSS code with $\rank G_x=\rank G_z=(n-k)/2$, as there are
$k'=n-(n-k)/2=(n+k)/2$ information bits for both codes.  In addition
to an increased number of the information symbols, each obtained
vector of small weight has to be tested on linear dependence with the
rows of the parity check matrix $H$.  In addition to the considered
mapping, one can also map an additive $[[n,k,d]]$ code to a binary
code with block length $3n$ and weight of each codeword doubled; such 
mapping typically gives a larger complexity and will not be considered
here\cite{Calderbank-1997,White-Grassl-2006}.

For a generic stabilizer code with relative distance $\delta$, the
binary complexity exponent of the sliding-window technique is $F=2 R'
H_4(\delta)$.  Similarly, for a CSS code, the sliding-window technique
gives the complexity exponent $F_{Aq}=2 R' H_2(\delta)$.  Both results
produce the same complexity exponent $$F_{Aq}^{(GV)}=(1-R^2)/2$$ on the
quantum GV bound, namely $R=1-2H_4(\delta)$ for generic quantum
codes\cite{Feng:dec.2004}, and $R=1-2H_2(\delta)$ for CSS
codes\cite{Calderbank-Shor-1996}.  The dependence $F_{Aq}^{(GV)}(R)$
is shown in Fig.~\ref{fig:cmp} with a solid red line.  Note that for
codes with small relative distance $\delta$, the complexity exponent
is logarithmic in $\delta$, e.g., $F_{Aq}\sim \delta
(1+R)\log_{2}(e/\delta)$ in the case of a CSS code.

\begin{figure}[htbp]
  \centering
  \includegraphics[width=\columnwidth]{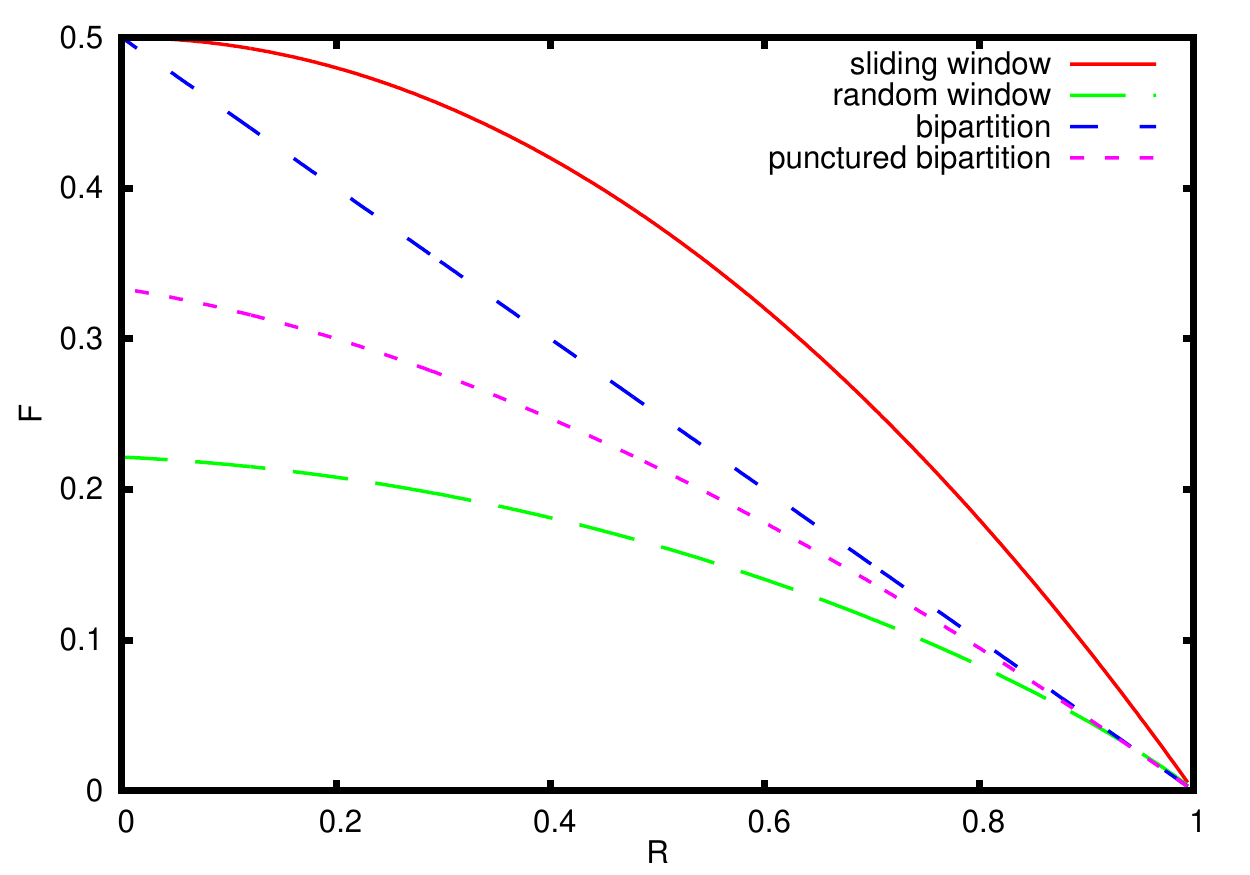}
  \caption{Comparison of the binary complexity exponents for the four
    classical decoding techniques applied to quantum codes at the
    quantum GV bound, see Sec.~\ref{sec:generic}.  Note that for
    high-rate codes, $R\to1$, the curves for the sliding window and
    the random window techniques have logarithmically-divergent slopes,
    while the slopes for the two other techniques remain finite.  In
    this limit of $R\to1$ the punctured bipartition technique gives the best
    performance.}
  \label{fig:cmp}
\end{figure}

\subsection{Random window
  technique\cite{Leon-1988,Stern1989,Kruk-1989,Coffey-Goodman-1990}} 

Given a $q$-ary linear $(n,k)$ code, we randomly choose $s=k+\tau$ positions
with positive number $\tau=o(k)$.  For any codeword, we wish to find an
$s$-set of small weight $t$.  Given an error pattern (or a codeword) of weight
$w$, we only need to estimate the number of random trials $T_{t}(n,s,w)$
needed to find such set with a high probability. This number is well known (up
to a factor of order $n)$ \cite{Erdos-book} and it is upper-bounded by
\begin{equation}
  T_{0}(n,s,w)\asymp \textstyle{n\choose w} /{n-s\choose w}.\label{eq:112} 
\end{equation}
To determine the distance of a code, we choose $w=1,2,\ldots$.  Then
we perform $nT_{0}(n,s,w)$ trials of choosing $s$ random positions. 

Note that a randomly
chosen $k\times s$ submatrix $G_{s}$ of a random generator matrix $G$ has full
rank $k$ with a high probability $1-q^{-\tau}$ (also, most matrices $G$ have
\textit{all possible } submatrices $G_{k}$ of rank $k-n^{1/2}$ or more$).$ If
the current $s$-set includes any information $k$-subset, we only consider $s$
vectors $(0...010...0)$ of weight  $t=1$. We then  re-encode them into the
codewords of length $n.$ Otherwise, we discard an $s$-set and proceed further.
We stop the algorithm, once we obtain a codeword of weight $w.$ 
The overall
complexity has the order of $n^{4} T_{0}(n,s,w)$, and it is
independent of $q$, in contrast to the sliding-window technique. 
This corresponds to the binary complexity exponent
$F_B=H_2(\delta)-(1-R)H_2(\delta/(1-R))$.  For binary random linear codes
meeting the GV bound, the complexity exponent is easily verified to be
\[
F_B^{(GV)}=(1-R)\left(  1-H_2\left(  \delta/(1-R\right)  \right)  ,
\]
where $H_2(\cdot)$ is the binary entropy and $\delta\equiv H_2^{-1}(1-R)$ is
the relative GV distance $w/n$.  In particular, $F\approx0.11$ for the
rate $R=1/2$.    For small
$w\leq(n-k)^{1/2}$, we can also use a simpler estimate 
\[
T_{0}(n,s,w)\asymp\left(  \frac{n}{n-s}\right)  ^{w}\asymp(1-R)^{-w}
\]
that has the exponent  linear in code distance $w$.

Just as the sliding window technique, this technique relies on
recoding (re-encoding).  Thus, the quantum complexity can be obtained
by substituting the effective rate $R'=(1+R)/2$.  In particular, for a
generic stabilizer code meeting the quantum GV bound, we have the
binary complexity exponent as shown in Fig.~\ref{fig:cmp} with the
green dashed line; it reaches the maximum of $F_{\rm max}\approx 0.22
$ at $R=0$, i.e., for small-rate codes.

\subsection{Bipartition technique\cite{Dumer-1989}}
\label{sec:bipartition} The idea is to use a sliding (``left'') window
of length $s_{l}=\left\lfloor n/2\right\rfloor $ starting in any
position $i$.  For any vector of weight $w$, at least one position $i$
will produce a window of weight $v_{l}=\left\lfloor w/2\right\rfloor
$. The remaining (right) window of length $s_{r}=\left\lceil
  n/2\right\rceil $ will have the weight $v_{r}=\left\lceil
  w/2\right\rceil $.  We calculate the syndromes of all vectors
$e_{l}$ and $e_{r}$ of weights $v_{l}$ and $v_{r}$ on the left and
right windows, respectively, and try to find a pair of vectors
$\{e_{l},e_{r}\}$ that produce identical syndromes, and therefore form
a codeword. Clearly, each set $\{e_{l}\}$ and $\ \{e_{r}\}$ have
exponential size of order $L=(q-1)^{w/2}{n/2\choose w/2} $.  Finding
two elements $e_{l}$, $e_{r}$ with equal syndromes can be performed,
e.g., by sorting the elements of the combined set, or, to save on
memory, sorting the elements of the left set and using binary search
for each of the syndromes from the right set.  This has a similar
complexity of order $L\log_{2}L$. Thus, finding a code vector of
weight $w=\delta n$ requires complexity of order
\[
q^{F_Cn},\;F_C=H_{q}(\delta)/2.
\]
For random binary codes which meet the GV bound, we have exponent
$F_{C}^{(GV)}=(1-R)/2$. For code rate $R=1/2$, this gives
$F_{C}^{(GV)}= F_{A}^{(GV)}=1/4$. For higher code rates, the
bipartition technique gives exponent $F_{C}^{(GV)}<F_{A}^{(GV)}$.  It
can also be verified that $F_{C}^{(GV)}<F_{B}^{(GV)}$ for code rates
$R$ approaching 1.  In addition, bipartition technique is guaranteed
to work with any linear code, as opposed to two previous techniques
provably valid for random codes.

The bipartition technique is also the only technique that can be
transferred to quantum codes without any performance loss.  For a
generic quantum code and a CSS code corresponding binary complexity
exponents are $F_{Cq}=H_4(\delta)$ and $H_2(\delta)$, respectively.  On the
quantum GV bound, this gives binary exponent $F_{Cq}^{(GV)}=(1-R)/2$ in both
cases, see Fig.~\ref{fig:cmp}.  Note that this line is always below
that for $F_{Aq}^{(GV)}$, and for high-rate codes the
corresponding line is below that for the random window technique, $F_{Bq}^{(GV)}$.

\subsection{Punctured bipartition technique \cite{Dumer-2001}}

Here we combine the sliding-window technique with bipartition.
Consider a relatively large sliding window of length
\begin{equation}
s=\left\lceil
  2nR/(1+R)\right\rceil .\label{eq:punctured-window-size}
\end{equation}
Note that most random $[n,k]$ codes include
at least one information set on any sliding $s$-window $I(i,s)$ with
initial position $i=0,...,n-1$. Thus, any such window forms a
punctured linear $[s,k]$ code with a smaller redundancy $s-k$.  Also,
any codeword of weight $w$ has weight $v=\left\lfloor
  ws/n\right\rfloor $ on some sliding window.  For simplicity, let $s$
and $v$ be even. We then use bipartition on each $s$-window and
consider all vectors $e_{l}$  and $e_{r}$  of weight $v/2$ on either
half of length $s/2$. The corresponding sets $\{e_{l}\}$ and $
\{e_{r}\}$ have size $L_{s}%
=(q-1)^{v/2}\left( _{v/2}^{s/2}\right) $. We then seek all matching
pairs $\{e_{l},e_{r}\}$ that have the same syndrome $h$.  Each such
pair $\{e_{l},e_{r}\}$ represents some code vector of the punctured
$[s,k]$ code and is re-encoded to the full length $n$. For each
$w=1,2,...,$ we stop the procedure once we find a re-encoded vector of
weight $w$. Obviously, this technique can lower the complexity to the
order $L_{s}$. Note, however, that many vectors $e_{l}$ and $e_{r}$ of
length $s/2$ can simultaneously have the same syndrome $h$ of size
$s-k$. Thus, our task is to encode \textit{all code vectors} of weight
$v$ in a random $[s,k]$ code. It can be shown \cite{Dumer-2001} that
our choice of parameter $s$ limits the number of such codewords by the
same order $L_{s}$. Thus, we can find any codeword of weight $w=\delta
n$ with a smaller complexity
\[
q^{F_{C}s}=q^{F_{D}n},\;F_{D}=H_{q}(\delta)R/(1+R).
\]
For codes meeting the GV bound, $F_{D}^{(GV)}=R(1-R)/(1+R)$. Note however,
that this combined technique cannot be provably applied to any 
linear code, in contrast to a simpler bipartition technique.

Somewhat similarly to regular case in Sec.~\ref{sec:bipartition}, the
performance of the bipartition in this technique is not affected when
we consider quantum codes.  However, in the
expression~(\ref{eq:punctured-window-size}) for the optimal block
size, one needs to use the effective quantum rate $R'=(1+R)/2$.  As a
result, the complexity exponent for regular stabilizer codes becomes 
\begin{equation}
  F_{Dq}={2R'\over 1+R'}H_{4}(\delta)={2(1+R)\over
    3+R}H_4(\delta);
  \label{eq:punctured-complexity}  
\end{equation}
 on the GV bound  this gives 
$$
  F_{Dq}^{(GV)}={(1-R^2)\over
    3+R}.
$$
This technique is the best for high-rate quantum codes, $R\to1$.

\section{Linked-cluster technique}

Here we present a technique which is designed specifically for very
sparse quantum LDPC codes, as an alternative to the belief
propagation technique.

For a $(j,\ell)$-limited LDPC code, we represent all (qu)bits as nodes
of a graph ${\cal G}_1$ of degree at most $z$: two nodes are connected
by an edge iff there is a row in the parity check matrix which has
non-zero values at both positions.  An error with support in a subset
${\cal E}\subseteq V({\cal G}_1)$ of the vertices defines the subgraph
${\cal G}_1({\cal E})$ induced by ${\cal E}$.  Generally, we will not
make a distinction between a set of vertices and the corresponding
induced subgraph.  In particular, a (connected) cluster in ${\cal E}$
corresponds to a connected subgraph of ${\cal G}_1({\cal E})$.
Different clusters affect disjoint sets of rows of the parity check
matrix.  This implies the following

\begin{theorem}\label{th:basic}
  The support of a minimum-weight code word of a $q$-ary code with the
  parity check matrix $\mat H$ forms a linked cluster on  ${\cal
    G}_1$.
\end{theorem}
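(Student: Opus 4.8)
The plan is to argue by contradiction, exploiting the fact that a codeword's support, viewed as a subgraph of $\mathcal{G}_1$, must be connected precisely because disconnected clusters touch disjoint sets of parity checks. First I would let $\mathbf{c}$ be a minimum-weight codeword with support $\mathcal{E}\subseteq V(\mathcal{G}_1)$, and suppose for contradiction that the induced subgraph $\mathcal{G}_1(\mathcal{E})$ is disconnected. Then I can split $\mathcal{E}$ into two nonempty parts $\mathcal{E}_1$ and $\mathcal{E}_2$ with no edge of $\mathcal{G}_1$ joining them, which by the construction of $\mathcal{G}_1$ means that no row of $\mat H$ has nonzero entries in both $\mathcal{E}_1$ and $\mathcal{E}_2$. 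Writing $\mathbf{c}=\mathbf{c}_1+\mathbf{c}_2$ where $\mathbf{c}_i$ is the restriction of $\mathbf{c}$ to $\mathcal{E}_i$, I would then decompose the syndrome equation accordingly.

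The key step is to show that each piece $\mathbf{c}_i$ is \emph{itself} a codeword, i.e. $\mat H\mathbf{c}_i=0$. Because the supports $\mathcal{E}_1$ and $\mathcal{E}_2$ touch disjoint sets of rows, for any fixed row $\mathbf{h}$ of $\mat H$ the inner product $\mathbf{h}\cdot\mathbf{c}$ receives a contribution from at most one of the two pieces: either $\mathbf{h}$ is supported away from $\mathcal{E}_2$, so $\mathbf{h}\cdot\mathbf{c}_2=0$ and hence $\mathbf{h}\cdot\mathbf{c}_1=\mathbf{h}\cdot\mathbf{c}=0$, or symmetrically $\mathbf{h}\cdot\mathbf{c}_1=0$ and $\mathbf{h}\cdot\mathbf{c}_2=\mathbf{h}\cdot\mathbf{c}=0$. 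In either case both partial inner products vanish for every row, so $\mat H\mathbf{c}_1=\mat H\mathbf{c}_2=0$. Thus $\mathbf{c}_1$ and $\mathbf{c}_2$ are both genuine codewords, each of strictly smaller Hamming weight than $\mathbf{c}$ since $\mathcal{E}_1,\mathcal{E}_2$ are both nonempty.

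This immediately contradicts the minimality of $\mathbf{c}$: at least one of $\mathbf{c}_1,\mathbf{c}_2$ is a nonzero codeword of weight less than $\wgt(\mathbf{c})$, violating the assumption that $\mathbf{c}$ has minimum weight. Hence $\mathcal{G}_1(\mathcal{E})$ cannot be disconnected, and the support forms a single linked cluster.

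The main obstacle, and the point that needs care rather than routine checking, is the precise translation between the combinatorial statement (``no edge of $\mathcal{G}_1$ joins $\mathcal{E}_1$ and $\mathcal{E}_2$'') and the algebraic one (``each row of $\mat H$ is supported in only one part''). I would want to make sure the edge-definition of $\mathcal{G}_1$ is invoked in exactly the right direction: the absence of an edge between a vertex in $\mathcal{E}_1$ and one in $\mathcal{E}_2$ means no single row simultaneously has support on both, which is what lets the syndrome split cleanly. For the quantum case the same argument should go through verbatim with $\mat H$ the parity check matrix of the associated additive $\mathbb{F}_4$ code, provided one remembers that degeneracy-group vectors are excluded, so that a nonzero minimum-weight logical operator remains nonzero after restriction; this is the one place where the quantum setting could in principle differ from the purely classical statement and deserves an explicit remark.
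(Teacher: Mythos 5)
Your proof is correct and follows essentially the same route as the paper's: assume the support is disconnected, observe that by the construction of ${\cal G}_1$ the disconnected parts affect disjoint sets of rows of $\mat H$, conclude each restricted vector is itself in the null space of $\mat H$, and contradict minimality. Your write-up simply makes explicit the row-by-row splitting of the syndrome and adds a sensible (and correct) side remark about handling the degeneracy group in the quantum case, which the paper's classical statement does not need.
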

\begin{proof}
  Indeed, let us assume this is not so, and a minimum-weight code word
  $\mathbf{c}$ is supported by two or more disconnected parts.  By
  construction, these affect different rows of the parity check matrix
  and, therefore, the vectors corresponding to subsets of non-zero symbols in
  $\mathbf{c}$ are in the null-space of $\mat H$, contrary to the
  assumption that $\mathbf{c}$ has minimum weight.
\end{proof}

Thus, in order to determine the distance of a code by an exhaustive
search, we do not have to list all error patterns; instead, one can go
over all linked clusters of increasing sizes.  We used the following
variant of the breadth-first algorithm to construct all linked cluster
of a given size $w$: 

Start with a position $i=0,1,\ldots,n-w$, and add to the list all
neighboring positions to the right of $i$.  At each subsequent level
of recursion, only go over the positions in the list to the right of
the position added on the previous level.  Once a new position is
selected, add all new neighboring positions which are to the right of
the original starting point $i$.  The recursion should stop after the
desired cluster size $w$ is reached.  This way, the algorithm
generates all clusters of size $w$, and no repeated clusters are
produced.  In the case of a binary code, each linked cluster of weight
$w$ directly corresponds to a potential code word of same weight.  In
the case of a $q$-ary code, one needs to check the rank of a matrix
formed by the corresponding columns of the parity check matrix.

The upper cup on the total number of the linked clusters of size $w$ for a given
$(j,\ell)$-limited LDPC code can be obtained from the cluster
distribution for a regular tree .  The degrees of the graph
${\cal G}_1$ are limited from above by
$z\equiv(\ell-1)j$.   Among the
degree-limited graphs, the $z$-regular tree has the largest number of
clusters (it does not have any loops).  Namely, the number of
weight-$w$ clusters containing a given 
vertex is\cite{Hu-clustersize-1987}
\begin{equation}
  N_w
  ={z\over  w-1}{(z-1)w\choose w-2}\asymp {z\over (z-2)^2w}2^{(z-1)w
    H_2(1/(z-1))}, 
  \label{eq:cluster-dist}
\end{equation}
where the asymptotic form is valid for large $w$.  Note that the loops
present in the actual graph tend to reduce the exponent; also, at
$w\agt n/(z-1)$ there is further reduction in $N_w$ due to finite-size
effect.  Thus, we expect that for $w\alt n/(z-1)$,  the complexity
exponent for the linked-cluster method can be written as
\begin{equation}
  \label{eq:complexity-exp-cluster}
  F_{\rm LC}=\delta (z_\mathrm{eff}-1)H_2(1/(z_\mathrm{eff}-1))\asymp
  \delta \log_2(e(z_\mathrm{eff}-1)),
\end{equation}
where $z_\mathrm{eff}< z$.  
For example, a number of generalized
hypergraph-product codes have been constructed in
Ref.~\cite{Kovalev-Pryadko-Hyperbicycle-2012} from different binary
cyclic codes.  Codes originating
from the same check polynomial correspond to graphs with  the same
local structure as the graph
${\cal G}_1$ for the original hypergraph-product
codes\cite{Tillich2009}.  Examples of the cluster-number
scaling with weight for several such codes are given in
Fig.~\ref{fig:scaling}.


\begin{figure}[htbp]
  \centering
  \includegraphics[width=\columnwidth]{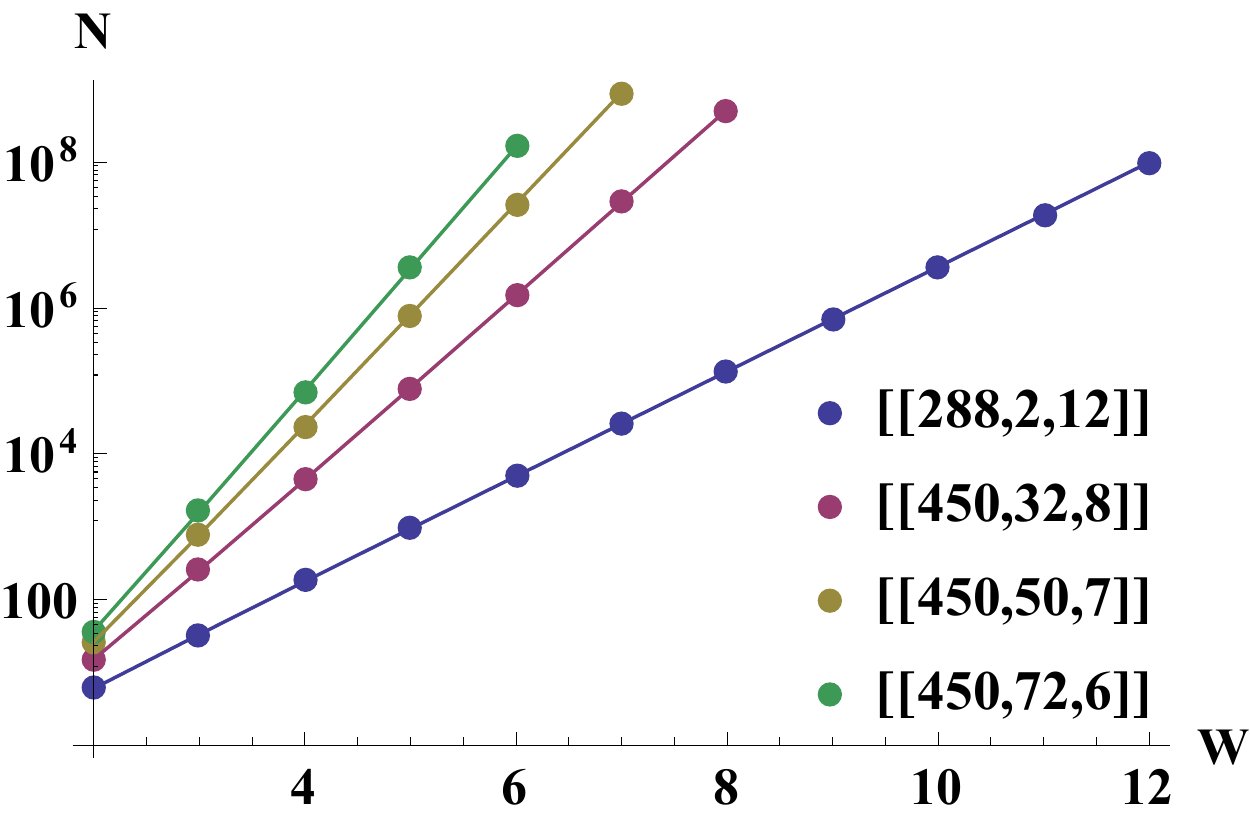}
  \caption{Dependence of the number of clusters $N$ on the
    corresponding weight $w$ for hypergraph-product codes obtained
    from cyclic codes with the check polynomials of weights $w_h= 2,
    3, 4,$ and 5.  The corresponding parity-check matrices are
    $(w_h,2w_h)$-regular, and the graphs ${\cal G}_1$ have degrees
    $z=(l-1)j=(2w_h-1)w_h=6,15,28,$ and 45.  The  fits to 
    $N=Ay^w$ give $y\equiv e(z_\mathrm{eff}-1)=  5.2, 18.8, 33.4$, and 47.0,
    respectively. }
  \label{fig:scaling}
\end{figure}

While the performance of the cluster-based technique deteriorates
rapidly with large $z$, and for larger distances, one advantage
evident from Eq.~(\ref{eq:complexity-exp-cluster}) is that the
complexity exponent is proportional to the relative distance $\delta$.
In comparison, any other deterministic technique in
Sec.~\ref{sec:generic} has the complexity scaling as $F\propto \delta
\log(1/\delta)$ in this limit.  Thus, the presented linked-cluster
technique has the best asymptotic performance for all known quantum
LDPC codes with limited-weight stabilizer generators, where
$\delta\propto n^{-1/2}$.  Compared with the random window technique
(which has the smallest complexity exponent in a wide range of rates),
$F_{Bq}\asymp\delta\log_2(1/(1-R))$ for small relative distances, also
linear in $\delta$, this technique is expected to win at rates such
that $1-R\alt (e \,z_\mathrm{eff})^{-1} $.


\section{Conclusion}
We suggested a cluster-based technique for finding the distance of very
sparse quantum LDPC codes.  It beats the existing non-probabilistic
algorithms for codes with sufficiently small relative distances (all
known families of quantum LDPC codes have distance scaling as
$n^{1/2}$ or lower at large $n$).  It also beats the probabilistic
random window technique for codes with sufficiently high rates.


\section*{Acknowledgment}
This work was supported in part by the U.S. Army Research Office Grant No.\
W911NF-11-1-0027, and by the NSF Grant No.\ 1018935.





\bibliographystyle{IEEEtran}
%

\IEEEtriggeratref{29}
\bibliography{lpp,qc_all,more_qc,MyBIB,ldpc}

\begin{thebibliography}{10}
\providecommand{\url}[1]{#1}
\csname url@samestyle\endcsname
\providecommand{\newblock}{\relax}
\providecommand{\bibinfo}[2]{#2}
\providecommand{\BIBentrySTDinterwordspacing}{\spaceskip=0pt\relax}
\providecommand{\BIBentryALTinterwordstretchfactor}{4}
\providecommand{\BIBentryALTinterwordspacing}{\spaceskip=\fontdimen2\font plus
\BIBentryALTinterwordstretchfactor\fontdimen3\font minus
  \fontdimen4\font\relax}
\providecommand{\BIBforeignlanguage}[2]{{%
\expandafter\ifx\csname l@#1\endcsname\relax
\typeout{** WARNING: IEEEtran.bst: No hyphenation pattern has been}%
\typeout{** loaded for the language `#1'. Using the pattern for}%
\typeout{** the default language instead.}%
\else
\language=\csname l@#1\endcsname
\fi
#2}}
\providecommand{\BIBdecl}{\relax}
\BIBdecl

\bibitem{shor-error-correct}
\BIBentryALTinterwordspacing
P.~W. Shor, ``Scheme for reducing decoherence in quantum computer memory,''
  \emph{Phys. Rev. A}, vol.~52, p. R2493, 1995. [Online]. Available:
  \url{http://link.aps.org/abstract/PRA/v52/pR2493}
\BIBentrySTDinterwordspacing

\bibitem{Knill-Laflamme-1997}
\BIBentryALTinterwordspacing
E.~Knill and R.~Laflamme, ``Theory of quantum error-correcting codes,''
  \emph{Phys. Rev. A}, vol.~55, pp. 900--911, 1997. [Online]. Available:
  \url{http://dx.doi.org/10.1103/PhysRevA.55.900}
\BIBentrySTDinterwordspacing

\bibitem{Bennett-1996}
\BIBentryALTinterwordspacing
C.~Bennett, D.~DiVincenzo, J.~Smolin, and W.~Wootters, ``Mixed state
  entanglement and quantum error correction,'' \emph{Phys. Rev. A}, vol.~54, p.
  3824, 1996. [Online]. Available:
  \url{http://dx.doi.org/10.1103/PhysRevA.54.3824}
\BIBentrySTDinterwordspacing

\bibitem{kitaev-anyons}
\BIBentryALTinterwordspacing
A.~Y. Kitaev, ``Fault-tolerant quantum computation by anyons,'' \emph{Ann.
  Phys.}, vol. 303, p.~2, 2003. [Online]. Available:
  \url{http://arxiv.org/abs/quant-ph/9707021}
\BIBentrySTDinterwordspacing

\bibitem{Dennis-Kitaev-Landahl-Preskill-2002}
\BIBentryALTinterwordspacing
E.~Dennis, A.~Kitaev, A.~Landahl, and J.~Preskill, ``Topological quantum
  memory,'' \emph{J. Math. Phys.}, vol.~43, p. 4452, 2002. [Online]. Available:
  \url{http://dx.doi.org/10.1063/1.1499754}
\BIBentrySTDinterwordspacing

\bibitem{Raussendorf-Harrington-2007}
\BIBentryALTinterwordspacing
R.~Raussendorf and J.~Harrington, ``Fault-tolerant quantum computation with
  high threshold in two dimensions,'' \emph{Phys. Rev. Lett.}, vol.~98, p.
  190504, 2007. [Online]. Available:
  \url{http://link.aps.org/abstract/PRL/v98/e190504}
\BIBentrySTDinterwordspacing

\bibitem{Wang-Fowler-Austin-Hollenberg-Lloyd-2011}
\BIBentryALTinterwordspacing
D.~S. Wang, A.~G. Fowler, and L.~C.~L. Hollenberg, ``Surface code quantum
  computing with error rates over 1\%,'' \emph{Phys. Rev. A}, vol.~83, p.
  020302, Feb 2011. [Online]. Available:
  \url{http://link.aps.org/doi/10.1103/PhysRevA.83.020302}
\BIBentrySTDinterwordspacing

\bibitem{Fowler:PRA2012}
\BIBentryALTinterwordspacing
A.~G. Fowler, M.~Mariantoni, J.~M. Martinis, and A.~N. Cleland, ``Surface
  codes: Towards practical large-scale quantum computation,'' \emph{Phys. Rev.
  A}, vol.~86, p. 032324, Sep 2012. [Online]. Available:
  \url{http://link.aps.org/doi/10.1103/PhysRevA.86.032324}
\BIBentrySTDinterwordspacing

\bibitem{Bombin-PRX-2012}
\BIBentryALTinterwordspacing
H.~Bombin, R.~S. Andrist, M.~Ohzeki, H.~G. Katzgraber, and M.~A.
  Martin-Delgado, ``Strong resilience of topological codes to depolarization,''
  \emph{Phys. Rev. X}, vol.~2, p. 021004, Apr 2012. [Online]. Available:
  \url{http://link.aps.org/doi/10.1103/PhysRevX.2.021004}
\BIBentrySTDinterwordspacing

\bibitem{Bravyi-Poulin-Terhal-2010}
\BIBentryALTinterwordspacing
S.~Bravyi, D.~Poulin, and B.~Terhal, ``Tradeoffs for reliable quantum
  information storage in 2d systems,'' \emph{Phys. Rev. Lett.}, vol. 104, p.
  050503, Feb 2010. [Online]. Available:
  \url{http://link.aps.org/doi/10.1103/PhysRevLett.104.050503}
\BIBentrySTDinterwordspacing

\bibitem{Fowler:2012arXiv}
A.~G. {Fowler}, M.~{Mariantoni}, J.~M. {Martinis}, and A.~N. {Cleland}, ``{A
  primer on surface codes: Developing a machine language for a quantum
  computer},'' \emph{ArXiv e-prints}, Aug. 2012.

\bibitem{Postol-2001}
\BIBentryALTinterwordspacing
M.~S. Postol, ``A proposed quantum low density parity check code,'' 2001,
  unpublished. [Online]. Available: \url{http://arxiv.org/abs/quant-ph/0108131}
\BIBentrySTDinterwordspacing

\bibitem{MacKay-Mitchison-McFadden-2004}
\BIBentryALTinterwordspacing
D.~J.~C. MacKay, G.~Mitchison, and P.~L. McFadden, ``Sparse-graph codes for
  quantum error correction,'' \emph{IEEE Transactions on Information Theory},
  vol.~59, pp. 2315--30, 2004. [Online]. Available:
  \url{http://dx.doi.org/10.1109/TIT.2004.834737}
\BIBentrySTDinterwordspacing

\bibitem{Tillich2009}
J.-P. Tillich and G.~Zemor, ``Quantum ldpc codes with positive rate and minimum
  distance proportional to {$\sqrt{n}$},'' in \emph{Information Theory, 2009.
  ISIT 2009. IEEE International Symposium on}, 28 2009-july 3 2009, pp. 799
  --803.

\bibitem{Kovalev-Pryadko-2012}
A.~A. Kovalev and L.~P. Pryadko, ``Improved quantum hypergraph-product {LDPC}
  codes,'' in \emph{Information Theory Proceedings (ISIT), 2012 IEEE
  International Symposium on}, july 2012, pp. 348--352.

\bibitem{Kovalev-Pryadko-Hyperbicycle-2012}
\BIBentryALTinterwordspacing
------, ``Quantum "hyperbicycle" low-density parity check codes with finite
  rate,'' 2012, unpublished. [Online]. Available:
  \url{http://arxiv.org/abs/1212.6703}
\BIBentrySTDinterwordspacing

\bibitem{Andriyanova-Maurice-Tillich-2012}
I.~Andriyanova, D.~Maurice, and J.-P. Tillich, ``New constructions of {CSS}
  codes obtained by moving to higher alphabets,'' 2012, unpublished.

\bibitem{Kovalev-Pryadko-FT-2012}
\BIBentryALTinterwordspacing
A.~A. Kovalev and L.~P. Pryadko, ``Fault-tolerance of "bad" quantum low-density
  parity check codes,'' 2012, submitted to Phys. Rev. Lett. [Online].
  Available: \url{http://arxiv.org/abs/1208.2317}
\BIBentrySTDinterwordspacing

\bibitem{Loss:PRX2012}
L.~Trifunovic, O.~Dial, M.~Trif, J.~R. Wootton, R.~Abebe, A.~Yacoby, and
  D.~Loss, ``Long-distance spin-spin coupling via floating gates,'' \emph{Phys.
  Rev. X}, vol.~2, p. 011006, Jan 2012.

\bibitem{yamamoto-cnot-2003}
\BIBentryALTinterwordspacing
T.~Yamamoto, Y.~A. Pashkin, O.~Astafiev, Y.~Nakamura, and J.~S. Tsai,
  ``Demonstration of conditional gate operation using superconducting charge
  qubits,'' \emph{Nature}, vol. 425, pp. 941--4, 2003. [Online]. Available:
  \url{http://dx.doi.org/10.1038/nature02015}
\BIBentrySTDinterwordspacing

\bibitem{Martinis-science-2005}
\BIBentryALTinterwordspacing
R.~McDermott, R.~W. Simmonds, M.~Steffen, K.~B. Cooper, K.~Cicak, K.~D. Osborn,
  S.~Oh, D.~P. Pappas, and J.~M. Martinis, ``Simultaneous state measurement of
  coupled josephson phase qubits,'' \emph{Science}, vol. 307, p. 1299, 2005.
  [Online]. Available: \url{http://dx.doi.org/10.1126/science.1107572}
\BIBentrySTDinterwordspacing

\bibitem{Benhelm-2008}
\BIBentryALTinterwordspacing
J.~Benhelm, G.~Kirchmair, C.~F. Roos, and R.~Blatt, ``Towards fault-tolerant
  quantum computing with trapped ions,'' \emph{Nature Physics}, vol.~4, pp.
  463--466, 2008. [Online]. Available: \url{http://dx.doi.org/10.1038/nphys961}
\BIBentrySTDinterwordspacing

\bibitem{Friedenauer-2008}
A.~Friedenauer, H.~Schmitz, J.~T. Glueckert, D.~Porras, and T.~Schaetz,
  ``Simulating a quantum magnet with trapped ions,'' \emph{Nature Physics},
  2008.

\bibitem{Bennett-teleportation-1993}
\BIBentryALTinterwordspacing
C.~H. Bennett, G.~Brassard, C.~Cr\'epeau, R.~Jozsa, A.~Peres, and W.~K.
  Wootters, ``Teleporting an unknown quantum state via dual classical and
  einstein-podolsky-rosen channels,'' \emph{Phys. Rev. Lett.}, vol.~70, pp.
  1895--1899, Mar 1993. [Online]. Available:
  \url{http://link.aps.org/doi/10.1103/PhysRevLett.70.1895}
\BIBentrySTDinterwordspacing

\bibitem{Gottesman-Chuang-1999}
\BIBentryALTinterwordspacing
D.~Gottesman and I.~L. Chuang, ``Demonstrating the viability of universal
  quantum computation using teleportation and single-qubit operations,''
  \emph{Nature}, vol. 402, pp. 390--393, 1999. [Online]. Available:
  \url{http://dx.doi.org/10.1038/46503}
\BIBentrySTDinterwordspacing

\bibitem{De-Pryadko-2012}
\BIBentryALTinterwordspacing
A.~De and L.~P. Pryadko, ``Universal set of scalable dynamically corrected
  gates for quantum error correction with always-on qubit couplings,'' 2013,
  phys. Rev. Letters, to be published. [Online]. Available:
  \url{http://arxiv.org/abs/1209.2764}
\BIBentrySTDinterwordspacing

\bibitem{Gallager1962}
R.~Gallager, ``Low-density parity-check codes,'' \emph{Information Theory, IRE
  Transactions on}, vol.~8, no.~1, pp. 21 --28, january 1962.

\bibitem{MacKay:2002}
D.~J.~C. MacKay, \emph{Information Theory, Inference \& Learning
  Algorithms}.\hskip 1em plus 0.5em minus 0.4em\relax New York, NY, USA:
  Cambridge University Press, 2002.

\bibitem{Poulin-Chung-2008}
D.~Poulin and Y.~Chung, ``On the iterative decoding of sparse quantum codes,''
  \emph{Quant. Info. and Comp.}, vol.~8, p. 987, 2008.

\bibitem{Kasai-Hagiwara-Imai-Sakaniwa-2012}
K.~Kasai, M.~Hagiwara, H.~Imai, and K.~Sakaniwa, ``Quantum error correction
  beyond the bounded distance decoding limit,'' \emph{Information Theory, IEEE
  Transactions on}, vol.~58, no.~2, pp. 1223 --1230, feb. 2012.

\bibitem{Calderbank-1997}
\BIBentryALTinterwordspacing
A.~R. Calderbank, E.~M. Rains, P.~M. Shor, and N.~J.~A. Sloane, ``Quantum error
  correction via codes over {GF(4)},'' \emph{IEEE Trans. Inf. Th.}, vol.~44,
  pp. 1369--1387, 1998. [Online]. Available:
  \url{http://dx.doi.org/10.1109/18.681315}
\BIBentrySTDinterwordspacing

\bibitem{Evseev-1983}
\BIBentryALTinterwordspacing
G.~S. Evseev, ``Complexity of decoding for linear codes.'' \emph{Probl.
  Peredachi Informacii (USSR)}, vol.~19, pp. 3--8, 1983, [Probl. Inf. Transm.
  (USSR), vol. 19, p. 1-6 (1983)]. [Online]. Available:
  \url{http://mi.mathnet.ru/ppi1159}
\BIBentrySTDinterwordspacing

\bibitem{Dumer-1996}
I.~Dumer, ``Suboptimal decoding of linear codes: partition technique,''
  \emph{Information Theory, IEEE Transactions on}, vol.~42, no.~6, pp. 1971
  --1986, nov 1996.

\bibitem{Zimmermann-1996}
K.-H. Zimmermann, ``Integral hecke modules, integral generalized reed-muller
  codes, and linear codes,'' Technische Universit at Hamburg-Harburg, Tech.
  Rep. Tech. Rep. 3-96, 1996.

\bibitem{Grassl-2006}
\BIBentryALTinterwordspacing
M.~Grassl, ``Searching for linear codes with large minimum distance,'' in
  \emph{Discovering Mathematics with Magma}, ser. Algorithms and Computation in
  Mathematics, W.~Bosma and J.~Cannon, Eds.\hskip 1em plus 0.5em minus
  0.4em\relax Springer Berlin Heidelberg, 2006, vol.~19, pp. 287--313.
  [Online]. Available: \url{http://dx.doi.org/10.1007/978-3-540-37634-7_13}
\BIBentrySTDinterwordspacing

\bibitem{White-Grassl-2006}
G.~White and M.~Grassl, ``A new minimum weight algorithm for additive codes,''
  in \emph{Information Theory, 2006 IEEE International Symposium on}, july
  2006, pp. 1119 --1123.

\bibitem{Feng:dec.2004}
K.~Feng and Z.~Ma, ``A finite gilbert-varshamov bound for pure stabilizer
  quantum codes,'' \emph{Information Theory, IEEE Transactions on}, vol.~50,
  no.~12, pp. 3323 -- 3325, dec. 2004.

\bibitem{Calderbank-Shor-1996}
A.~R. Calderbank and P.~W. Shor, ``Good quantum error-correcting codes exist,''
  \emph{Phys. Rev. A}, vol.~54, no.~2, pp. 1098--1105, Aug 1996.

\bibitem{Leon-1988}
J.~S. Leon, ``A probabilistic algorithm for computing minimum weights of large
  error-correcting codes,'' \emph{Information Theory, IEEE Transactions on},
  vol.~34, no.~5, pp. 1354 --1359, sep 1988.

\bibitem{Stern1989}
J.~Stern, ``A method for finding codewords of small weight,'' \emph{in
  \emph{Coding Theory and Applications}, ser. Lecture Notes in Computer
  Science, G.~Cohen and J.~Wolfmann, Eds.}, vol. 388, pp. 106 --113, 1989.

\bibitem{Kruk-1989}
\BIBentryALTinterwordspacing
E.~A. Kruk, ``Decoding complexity bound for linear block codes,'' \emph{Probl.
  Peredachi Inf.}, vol.~25, no.~3, pp. 103--107, 1989, (In Russian). [Online].
  Available: \url{http://mi.mathnet.ru/eng/ppi665}
\BIBentrySTDinterwordspacing

\bibitem{Coffey-Goodman-1990}
J.~T. Coffey and R.~M. Goodman, ``The complexity of information set decoding,''
  \emph{Information Theory, IEEE Transactions on}, vol.~36, no.~5, pp. 1031
  --1037, sep 1990.

\bibitem{Erdos-book}
P.~Erdos and J.~Spencer, \emph{Probabilistic methods in combinatorics}.\hskip
  1em plus 0.5em minus 0.4em\relax Budapest: Akademiai Kiado, 1974.

\bibitem{Dumer-1989}
\BIBentryALTinterwordspacing
I.~I. Dumer, ``Two decoding algorithms for linear codes,'' \emph{Probl.
  Peredachi Inf. (USSR)}, vol.~25, pp. 24--32, 1989, [Probl. Inf. Transm., 25,
  17-23 (1989)]. [Online]. Available: \url{http://mi.mathnet.ru/ppi635}
\BIBentrySTDinterwordspacing

\bibitem{Dumer-2001}
I.~Dumer, ``Soft-decision decoding using punctured codes,'' \emph{IEEE Trans.
  Inform. Theory}, vol.~47, no.~1, pp. 59--71, 2001.

\bibitem{Hu-clustersize-1987}
\BIBentryALTinterwordspacing
C.-K. Hu, ``Exact cluster size distributions and mean cluster sizes for the
  q-state bond-correlated percolation model,'' \emph{Journal of Physics A:
  Mathematical and General}, vol.~20, no.~18, p. 6617, 1987. [Online].
  Available: \url{http://stacks.iop.org/0305-4470/20/i=18/a=059}
\BIBentrySTDinterwordspacing

\end{thebibliography}

\end{document}